\newtheorem{thm}{Theorem}
\newtheorem{cor}{Corollary}
\newtheorem{ass}{Assumption}
\DeclareMathOperator*{\argmin}{argmin}
\long\def\symbolfootnote[#1]#2{\begingroup%
	\def\thefootnote{\fnsymbol{footnote}}\footnote[#1]{#2}\endgroup} 
\begin{document}

\title{Improved Information Theoretic Generalization Bounds for Distributed and Federated Learning}
\author{L. P. Barnes$^*$, Alex Dytso$^\dagger$, and H. V. Poor$^*$ \\
$^*$Princeton University, Department of Electrical and Computer Engineering \\
 $^\dagger$New Jersey Institute of Technology, Department of Electrical and Computer Engineering
\thanks{This work was supported in part by the National Science Foundation under Grant CCF-1908308.}
}

\maketitle

\begin{abstract}
We consider information-theoretic  bounds on expected generalization error for statistical learning problems in a networked setting. In this setting, there are $K$ nodes, each with its own independent dataset, and the models from each node have to be aggregated into a final centralized model. We consider both simple averaging of the models as well as more complicated multi-round algorithms. We give upper bounds on the expected generalization error for a variety of problems, such as those with Bregman divergence or Lipschitz continuous losses, that demonstrate an improved dependence of $1/K$ on the number of nodes. These ``per node'' bounds are in terms of the mutual information between the training dataset and the trained weights at each node, and are therefore useful in describing the generalization properties inherent to having communication or privacy constraints at each node.
\end{abstract}

\section{Introduction}
A key property of machine learning systems is their ability to generalize to new and unknown data. Such a system is trained on a particular set of data, but must then perform well even on new datapoints that have not previously been considered. This ability, deemed generalization, can be formulated in the language of statistical learning theory by considering the generalization error of an algorithm, i.e, the difference between the population risk of a model trained on a particular dataset and the empirical risk for the same model and dataset. We say that a model generalizes well if it has a small generalization error, and because models are often trained by minimizing empirical risk or some regularized version of it, a small generalization error also implies a small population risk which is the average loss over new samples taken randomly from the population. It is therefore of interest to upper bound generalization error and understand which quantities control it, so that we can quantify the generalization properties of a machine learning system and offer guarantees about how well it will perform.

In recent years, it has been shown that information theoretic quantities such as mutual information can be used to bound generalization error under assumptions on the tail of the distribution of the loss function \cite{russo_zou,xu_raginsky,veeravalli}. In particular, when the loss function is sub-Gaussian, the expected generalization error can scale at most with the square root of the mutual information between the training dataset and the model weights \cite{xu_raginsky}. These bounds offer an intuitive explanation for generalization and overfitting -- if an algorithm uses only limited information from its training data, then this will bound the expected generalization error and prevent overfitting. Conversely, if a training algorithm uses all of the information from its training data in the sense that the model is a deterministic function of the training data, then this mutual information can be infinite and there is the possibility of unbounded generalization error and thus overfitting.

Another modern focus of machine learning systems has been that of distributed and federated learning \cite{federated0,federated1,federated2}. In these systems, data is generated and processed in a distributed network of machines. The main differences between the distributed and centralized settings are the information constraints imposed by the network. There has been considerable interest in understanding the impact of both communication constraints \cite{deepgrad,rtopk} and privacy constraints \cite{warner,dwork1,whatcanwelearn,cuff} on the performance of machine learning systems, and in designing protocols that efficiently train systems under these constraints.

Since both communication and local differential privacy constraints can be thought of as special cases of mutual information constraints, they should pair naturally with some form of information theoretic generalization bound in order to induce control over the generalization error of the distributed machine learning system. The information constraints inherent to the network can themselves give rise to tighter bounds on generalization error and thus provide better guarantees against overfitting. Along these lines, in recent work \cite{spawc}, a subset of the present authors introduced the framework of using information theoretic quantities to bound both expected generalization error and a measure of privacy leakage in distributed and federated learning systems. The generalization bounds in this work, however, are essentially the same as those obtained by thinking of the entire system, from the data at each node in the network to the final aggregated model, as a single centralized algorithm. Any improved generalization guarantees from these bounds would remain implicit in the mutual information terms involved.

In this work, we develop improved bounds on expected generalization error for distributed and federated learning systems. Instead of leaving the differences between these systems and their centralized counterparts implicit in the mutual information terms, we bring analysis of the structure of the systems directly into the bounds. By working with the contribution from each node separately, we are able to derive upper bounds on expected generalization error that scale with the number of nodes $K$ as $O\left( \frac{1}{K} \right)$ instead of $O\left( \frac{1}{\sqrt{K}}\right)$. This improvement is shown to be tight for certain examples such as learning the mean of a Gaussian with squared $\ell^2$ loss. We develop bounds that apply to distributed systems in which the submodels from each one of $K$ different nodes are averaged together, as well as bounds that apply to more complicated multiround stochastic gradient descent (SGD) algorithms such as in federated learning. For linear models with Bregman divergence losses, these ``per node'' bounds are in terms of the mutual information between the training dataset and the trained weights at each node, and are therefore useful in describing the generalization properties inherent to having communication or privacy constraints at each node. For arbitrary nonlinear models that have Lipschitz continuous losses, the improved dependence of $O\left( \frac{1}{K} \right)$ can still be recovered, but without a description in terms of mutual information. We demonstrate the improvements given by our bounds over the existing information theoretic generalization bounds via simulation of a distributed linear regression example.

\subsection{Technical Preliminaries}
Suppose we have independent and identically distributed (i.i.d.) data $Z_i \sim \pi$ for $i=1,\ldots,n$ and let $S = (Z_1,\ldots,Z_n)$. Suppose further that $W = \mathcal{A}(S)$ is the output of a potentially stochastic algorithm. Let $\ell(W,Z)$ be a real-valued loss function and define
$$L(w) = \mathbb{E}_\pi[\ell(w,Z)]$$
to be the population risk for weights (or model) $w$. We similarly define
$$L_s(w) = \frac{1}{n} \sum_{i=1}^n \ell(w,z_i)$$
to be the empirical risk on dataset $s$ for model $w$. The generalization error for dataset $s$ is then
$$\Delta_\mathcal{A}(s) = L(\mathcal{A}(s)) - L_s(\mathcal{A}(s))$$
and the expected generalization error is
\begin{equation} \label{eq:gen_error}
\mathbb{E}_{S\sim \pi^n} [\Delta_\mathcal{A}(S)] = \mathbb{E}_{S\sim \pi^n} [ L(\mathcal{A}(S)) - L_S(\mathcal{A}(S))]
\end{equation}
where the expectation is also over any randomness in the algorithm. Below we present some standard results on the expected generalization error that will be needed.

\begin{thm}[Leave-one-out Expansion -- Lemma 11 in \cite{SSSS}] \label{thm:leave}
Let $S^{(i)} = (Z_1,\ldots,Z_i',\ldots,Z_n)$ be a version of $S$ with $Z_i$ replaced by an i.i.d. copy $Z_i'$. Denote $S' = (Z'_1,\ldots,Z'_n)$. Then
$$\mathbb{E}_{S\sim \pi^n} [\Delta_\mathcal{A}(S)] = \frac{1}{n}\sum_{i=1}^n\mathbb{E}_{S,S'}[ \ell(\mathcal{A}(S),Z_i') - \ell(\mathcal{A}(S^{(i)}),Z_i')] \; .$$
\end{thm}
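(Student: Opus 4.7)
The plan is to split the expected generalization error into its two natural pieces, $\mathbb{E}_S[L(\mathcal{A}(S))]$ and $\mathbb{E}_S[L_S(\mathcal{A}(S))]$, rewrite each as an average over $i\in\{1,\ldots,n\}$, and then use the i.i.d.\ symmetry between $Z_i$ and $Z_i'$ to convert the empirical-risk term into an expression involving $S^{(i)}$ and the test point $Z_i'$. Subtracting the two reformulated sums term-by-term should give the claimed identity.

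Concretely, first I would handle the population-risk term. Because $L(w) = \mathbb{E}_\pi[\ell(w,Z)]$ and the fresh sample $Z$ is distributed identically to any $Z_i'$ and independent of $S$, we have $\mathbb{E}_S[L(\mathcal{A}(S))] = \mathbb{E}_{S,Z_i'}[\ell(\mathcal{A}(S),Z_i')]$ for every $i$. Averaging over $i$ introduces the full tuple $S' = (Z_1',\ldots,Z_n')$ into the expectation and gives
\[
\mathbb{E}_S[L(\mathcal{A}(S))] \;=\; \frac{1}{n}\sum_{i=1}^n \mathbb{E}_{S,S'}\bigl[\ell(\mathcal{A}(S),Z_i')\bigr].
\]
For the empirical-risk term I would start from $L_S(\mathcal{A}(S)) = \frac{1}{n}\sum_i \ell(\mathcal{A}(S),Z_i)$ and then apply the key symmetry: since $Z_i$ and $Z_i'$ are i.i.d.\ and independent of the other coordinates, the joint law of $(Z_1,\ldots,Z_i,\ldots,Z_n,Z_i')$ is invariant under swapping the $i$th coordinate with $Z_i'$. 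Relabeling, $\mathbb{E}[\ell(\mathcal{A}(S),Z_i)] = \mathbb{E}[\ell(\mathcal{A}(S^{(i)}),Z_i')]$, and again averaging over $i$ and padding with the remaining components of $S'$ yields
\[
\mathbb{E}_S[L_S(\mathcal{A}(S))] \;=\; \frac{1}{n}\sum_{i=1}^n \mathbb{E}_{S,S'}\bigl[\ell(\mathcal{A}(S^{(i)}),Z_i')\bigr].
\]
Subtracting these two equalities gives exactly the claimed expansion.

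The only subtle point, and the step I would take care to state carefully, is the exchangeability argument used to rewrite the empirical-risk term: we are not simply using that $Z_i$ and $Z_i'$ have the same marginal law, but rather that swapping them inside the entire expression $\ell(\mathcal{A}(Z_1,\ldots,Z_n),Z_i)$ leaves the joint distribution of the relevant random variables unchanged, which is what permits the replacement of $(S,Z_i)$ by $(S^{(i)},Z_i')$ under expectation. Everything else is linearity of expectation and bookkeeping, so this symmetry is the core of the argument.
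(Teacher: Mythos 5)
Your proposal is correct and matches the paper's proof essentially step for step: the paper likewise rewrites the population risk as $\mathbb{E}_{S,S'}[\ell(\mathcal{A}(S),Z_i')]$ for each $i$, rewrites the empirical risk via the swap symmetry as $\frac{1}{n}\sum_i \mathbb{E}_{S,S'}[\ell(\mathcal{A}(S^{(i)}),Z_i')]$, and subtracts. Your explicit remark that the argument uses exchangeability of the joint law under swapping $Z_i$ and $Z_i'$ (not merely equality of marginals) is exactly the point the paper's one-line justification leaves implicit.
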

\begin{proof}
Observe that
\begin{equation} \label{eq:pop_risk}
\mathbb{E}_{S\sim \pi^n} [ L(\mathcal{A}(S))] =  \mathbb{E}_{S,S'}[ \ell(\mathcal{A}(S),Z_i')]
\end{equation}
for each $i$ and that
\begin{align} \label{eq:emp_risk}
\mathbb{E}_{S\sim \pi^n} [L_S(\mathcal{A}(S))] & = \frac{1}{n} \sum_{i=1}^n\mathbb{E}_{S\sim \pi^n} \left[ \ell(\mathcal{A}(S),Z_i)\right] \nonumber \\
& = \frac{1}{n} \sum_{i=1}^n \mathbb{E}_{S,S'\sim \pi^n} \left[ \ell(\mathcal{A}(S^{(i)}),Z'_i)\right] \; .
\end{align}
Putting \eqref{eq:pop_risk} and \eqref{eq:emp_risk} together with \eqref{eq:gen_error} yields the result.
\end{proof}
In many of the results in this paper, we will use one of the two following assumptions.
\begin{ass} \label{ass1}
The loss function $\ell(\widetilde W, \widetilde Z)$ satisfies
$$\log\mathbb{E}\left[\exp\left(\lambda\left(\ell(\widetilde W, \widetilde Z)-\mathbb{E}[\ell(\widetilde W, \widetilde Z)]\right)
\right)\right] \leq \psi(-\lambda) $$ for $\lambda \in (b,0]$, $\psi(0)=\psi'(0)=0$, where $\widetilde W, \widetilde Z$ are taken independently from the marginals for $W,Z$, respectively,
\end{ass} 
\noindent The next assumption is a special case of the previous one with $\psi(\lambda) = \frac{R^2\lambda^2}{2} \; .$
\begin{ass} \label{ass2}
The loss function $\ell(\widetilde W, \widetilde Z)$ is sub-Gaussian with parameter $R^2$ in the sense that
$$\log\mathbb{E}\left[\exp\left(\lambda\left(\ell(\widetilde W, \widetilde Z)-\mathbb{E}[\ell(\widetilde W, \widetilde Z)]\right)\right)\right] \leq  \frac{R^2\lambda^2}{2} \; .$$
\end{ass}

\begin{thm}[Theorem 2 in \cite{veeravalli}]\label{thm:info_theoretic}
Under Assumption \ref{ass1},
$$\mathbb{E}_{S\sim\pi^n}[\Delta_\mathcal{A}(S)] \leq \frac{1}{n}\sum_{i=1}^n \psi^{*-1}(I(W;Z_i))$$
where
$$\psi^{*-1}(y) = \inf_{\lambda\in[0,b)}\left(\frac{y+\psi(\lambda)}{\lambda}\right) \; .$$
\end{thm}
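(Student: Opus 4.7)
The plan is to combine the leave-one-out expansion of Theorem \ref{thm:leave} with a change-of-measure / Donsker--Varadhan style bound that converts a mean gap into a KL divergence, and then invoke the cumulant condition of Assumption \ref{ass1} to control the resulting log moment generating function.

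First, I would unwind the leave-one-out expansion. The term $\mathbb{E}_{S,S'}[\ell(\mathcal{A}(S),Z_i')]$ is an expectation over independent arguments (since $S \perp Z_i'$), so it equals $\mathbb{E}[\ell(\widetilde W,\widetilde Z)]$ with $\widetilde W \sim P_W$ and $\widetilde Z \sim \pi$ drawn independently. By the exchangeability of i.i.d.\ samples, $(\mathcal{A}(S^{(i)}),Z_i') \stackrel{d}{=} (\mathcal{A}(S),Z_i) = (W,Z_i)$, so $\mathbb{E}_{S,S'}[\ell(\mathcal{A}(S^{(i)}),Z_i')]=\mathbb{E}[\ell(W,Z_i)]$. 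Hence Theorem \ref{thm:leave} reduces the target to
$$ \mathbb{E}[\Delta_\mathcal{A}(S)] \;=\; \frac{1}{n}\sum_{i=1}^n \bigl(\mathbb{E}[\ell(\widetilde W,\widetilde Z)]-\mathbb{E}[\ell(W,Z_i)]\bigr). $$

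Next I would bound each summand using the Donsker--Varadhan variational formula for KL divergence applied to $P = P_{W,Z_i}$ and $Q = P_W \otimes P_{Z_i}$ with test function $f = \lambda\,\ell$. For every admissible $\lambda$,
$$ \lambda\,\mathbb{E}_{W,Z_i}[\ell(W,Z_i)] - \log\mathbb{E}_{\widetilde W,\widetilde Z}\bigl[e^{\lambda(\ell(\widetilde W,\widetilde Z)-\mathbb{E}[\ell(\widetilde W,\widetilde Z)])}\bigr] - \lambda\,\mathbb{E}[\ell(\widetilde W,\widetilde Z)] \;\le\; I(W;Z_i). $$
Under Assumption \ref{ass1} the log-MGF term is at most $\psi(-\lambda)$ for $\lambda\le 0$. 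Rearranging and dividing by $-\lambda>0$ gives, for each admissible $\lambda$,
$$ \mathbb{E}[\ell(\widetilde W,\widetilde Z)]-\mathbb{E}[\ell(W,Z_i)] \;\le\; \frac{I(W;Z_i)+\psi(-\lambda)}{-\lambda}. $$
Reparameterizing $\mu = -\lambda$ and taking the infimum over the allowed range recovers exactly $\psi^{*-1}(I(W;Z_i))$. Summing over $i$ and dividing by $n$ yields the claim.

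The main subtlety, and the step that needs care, is the sign convention in Assumption \ref{ass1}: only the lower tail of $\ell(\widetilde W,\widetilde Z)$ is controlled, corresponding to $\lambda\le 0$ in the MGF. This is precisely the side needed to upper bound $\mathbb{E}[\ell(\widetilde W,\widetilde Z)]-\mathbb{E}[\ell(W,Z_i)]$ (population risk minus empirical risk) rather than its absolute value, so the direction of the inequality after dividing by the negative $\lambda$ must be tracked carefully. Once the signs are handled, the rest is the standard Legendre-transform manipulation, and the sub-Gaussian case (Assumption \ref{ass2}) drops out as the quadratic specialization giving $\psi^{*-1}(y)=\sqrt{2R^2 y}$.
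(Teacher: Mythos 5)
Your proof is correct: the paper itself states this theorem without proof, importing it from \cite{veeravalli}, and your argument---reducing via the leave-one-out expansion to $\frac{1}{n}\sum_{i=1}^n\left(\mathbb{E}[\ell(\widetilde W,\widetilde Z)]-\mathbb{E}[\ell(W,Z_i)]\right)$ and then applying the Donsker--Varadhan variational formula with $P=P_{W,Z_i}$, $Q=P_W\otimes P_{Z_i}$ together with the one-sided cumulant bound of Assumption \ref{ass1}---is essentially the standard proof given in that reference. Your sign bookkeeping for $\lambda\le 0$ and the reparameterization $\mu=-\lambda$ are handled correctly (and indeed resolve the paper's slightly loose notation, where the $b$ in Assumption \ref{ass1} is negative while the $b$ in the infimum defining $\psi^{*-1}$ plays the role of $-b$), so nothing is missing.
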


Recall that for a continuously differentiable and strictly convex function $F:\mathbb{R}^m\to\mathbb{R}$, we define the associated Bregman divergence \cite{bregman1967relaxation} between two points $p,q\in\mathbb{R}^m$ to be
$$D_F(p,q) = F(p) - F(q) - \langle\nabla F(q),p-q\rangle \; ,$$
where $\langle \cdot, \cdot \rangle$ denotes the usual inner product.
\section{Distributed Learning and Model Aggregation}
Now suppose that there are $K$ nodes each with $n$ samples. Each node $k=1,\ldots,K$ has dataset $S_k = (Z_{1,k},\ldots,Z_{n,k})$ with $Z_{i,k}$ taken i.i.d. from $\pi$. We use $S=(S_1,\ldots,S_K)$ to denote the entire dataset of size $nK$. Each node locally trains a model $W_k = \mathcal{A}_k(S_k)$ with algorithm $\mathcal{A}_k$. After each node locally trains its model, the models $W_k$ are then combined to form the final model $\widehat{W}$ using an aggregation algorithm $\widehat W = \widehat{\mathcal{A}}(W_1,\ldots,W_K)$. See Figure \ref{fig1b}. In this section we will assume that $W_k\in\mathbb{R}^d$ and that the aggregation is done by simple averaging, i.e.,
$$\widehat W = \frac{1}{K}\sum_{k=1}^K W_k \; .$$
Define $\mathcal{A}$ to be the total algorithm from data $S$ to the final weights $\widehat W$ so that $\widehat W = \mathcal{A}(S) \; .$

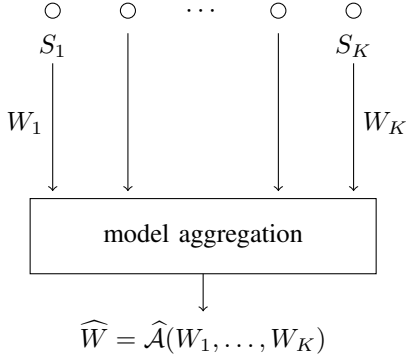
\begin{figure}
\centering{
\begin{tikzpicture}
\draw (0,0) circle (0.1cm); \node [below] at (0,-0.2) {$S_1$}; 
\draw (1,0) circle (0.1cm); 
\node at (2,0) {$\cdots$}; 
\draw (3,0) circle (0.1cm);
\draw (4,0) circle (0.1cm); \node [below] at (4,-0.2) {$S_K$}; 
\draw [->] (0, -0.7) -- (0, -2.4); \draw [->] (1, -0.3) -- (1, -2.4); 
\draw [->] (3, -0.3) -- (3, -2.4); \draw [->] (4, -0.7) -- (4, -2.4);
\draw (-0.3,-3.5) rectangle (4.3,-2.5); 
\node at (2,-3) {model aggregation}; 
\draw [->] (2,-3.5) -- (2, -4); \node [below] at (2,-4) {$\widehat W = \widehat{\mathcal{A}}(W_1,\ldots,W_K)$}; 
\node [left] at (0, -1.5) {$W_1$}; \node [right] at (4,-1.5) {$W_K$}; 
\end{tikzpicture}
}
\caption{The distributed learning setting with model aggregation.}
\label{fig1b}
\end{figure}

\begin{thm}\label{thm:convex}
Suppose that $\ell(\cdot,z)$ is a convex function of $w\in\mathbb{R}^d$ for each $z$ and that $\mathcal{A}_k$ represents the empirical risk minimization algorithm on local dataset $S_k$ in the sense that
$$W_k = \mathcal{A}_k(S_k) = \argmin_w \sum_{i=1}^n \ell(w,Z_{i,k}) \; .$$
Then
$$\Delta_\mathcal{A}(s) \leq \frac{1}{K}\sum_{k=1}^K \Delta_{\mathcal{A}_k}(s_k) \; .$$ 
\end{thm}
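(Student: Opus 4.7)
The plan is to decompose $\Delta_\mathcal{A}(s) = L(\widehat W) - L_s(\widehat W)$ into two pieces, bound the population risk term from above using convexity, and bound the empirical risk term from below using the ERM property at each node. The aggregated empirical risk naturally splits as $L_s(\widehat W) = \tfrac{1}{nK}\sum_{k,i}\ell(\widehat W, z_{i,k}) = \tfrac{1}{K}\sum_{k=1}^K L_{s_k}(\widehat W)$, which makes a per-node comparison possible.

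First I would apply Jensen's inequality to $L$: since $\ell(\cdot,z)$ is convex for every $z$, taking expectation over $Z\sim\pi$ preserves convexity, so $L(\widehat W) = L\bigl(\tfrac{1}{K}\sum_k W_k\bigr) \leq \tfrac{1}{K}\sum_{k=1}^K L(W_k)$. This handles the population term and reduces it to a per-node form.

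Next I would use the ERM hypothesis to handle the empirical term. For each $k$, because $W_k = \argmin_w \sum_i \ell(w,Z_{i,k})$, we have $L_{s_k}(W_k) \leq L_{s_k}(\widehat W)$, since $\widehat W$ is just another admissible choice of weights. Averaging over $k$ gives $\tfrac{1}{K}\sum_k L_{s_k}(W_k) \leq \tfrac{1}{K}\sum_k L_{s_k}(\widehat W) = L_s(\widehat W)$, which is exactly a lower bound on the aggregated empirical risk in terms of the per-node empirical risks evaluated at the per-node minimizers.

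Combining the two bounds yields
\[
\Delta_\mathcal{A}(s) \;=\; L(\widehat W) - L_s(\widehat W) \;\leq\; \frac{1}{K}\sum_{k=1}^K L(W_k) - \frac{1}{K}\sum_{k=1}^K L_{s_k}(W_k) \;=\; \frac{1}{K}\sum_{k=1}^K \Delta_{\mathcal{A}_k}(s_k),
\]
as desired. There is no real technical obstacle here: the only subtlety is recognizing that both the convexity (for population risk) and the ERM property (for empirical risk) must be used in opposite directions, and that the averaging structure of $\widehat W$ interacts cleanly with both. Randomness in $\mathcal{A}_k$ is harmless since the two inequalities hold pointwise in the realizations of $W_1,\ldots,W_K$.
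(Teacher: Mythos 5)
Your proof is correct and is essentially identical to the paper's: the paper likewise bounds the population term $L(\widehat W) \leq \frac{1}{K}\sum_k L(W_k)$ via Jensen's inequality (line \eqref{eq:convex_loss_1}) and bounds the aggregated empirical risk from below by $\frac{1}{K}\sum_k \min_w L_{s_k}(w)$, which is exactly your per-node ERM comparison $L_{s_k}(W_k) \leq L_{s_k}(\widehat W)$ (line \eqref{eq:convex_loss_2}). The only difference is presentational: you phrase the second step as an explicit comparison against $\widehat W$ as a feasible point, while the paper writes it directly as a minimum over $w$.
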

\begin{proof}
\begin{align}
 \Delta_\mathcal{A} & (s)  =  \mathbb{E}_{Z\sim\pi}[\ell(\mathcal{A}(s),Z)] - \frac{1}{nK} \sum_{i,k} \ell(\mathcal{A}(s),z_{i,k}) \nonumber \\
& = \mathbb{E}_{Z\sim\pi}\left[\ell\left(\frac{1}{K}\sum_{k=1}^K w_k,Z\right)\right] - \frac{1}{nK} \sum_{i,k} \ell(\mathcal{A}(s),z_{i,k}) \nonumber \\
& \leq \frac{1}{K}\sum_{k=1}^K\mathbb{E}_{Z\sim\pi}[\ell(w_k,Z)] - \frac{1}{nK} \sum_{i,k} \ell(\mathcal{A}(s),z_{i,k}) \label{eq:convex_loss_1}\\
& \leq \frac{1}{K}\sum_{k=1}^K\mathbb{E}_{Z\sim\pi}[\ell(w_k,Z)] - \frac{1}{K}\sum_{k=1}^K\min_w\frac{1}{n}\sum_{i=1}^n \ell(w,z_{i,k}) \label{eq:convex_loss_2}\\
& = \frac{1}{K}\sum_{k=1}^K \Delta_{\mathcal{A}_k}(s_k). \nonumber
\end{align}
In the above display, line \eqref{eq:convex_loss_1} follows by the convexity of $\ell$ via Jensen's inequality, and line \eqref{eq:convex_loss_2} follows by minimizing the empirical risk over each node's local dataset, which exactly corresponds to what each node's local algorithm $\mathcal{A}_k$ does.
\end{proof}
While Theorem \ref{thm:convex} seems like a nice characterization of generalization bounds for the aggregate model -- in that the aggregate generalization error cannot be any larger than the average generalization errors over each node -- it does not offer any improvement in the expected generalization error that one might expect given $nK$ total samples instead of just $n$ samples. A naive application of the information theoretic generalization bounds from Theorem \ref{thm:info_theoretic}, followed by the data processing inequality $I(\widehat W;Z_{i,k}) \leq I(W_k;Z_{i,k})$, runs into the same problem.
\subsection{Improved Bounds}
In this subsection, we prove bounds on expected generalization error that remedy the above shortcomings. In particular, we would like the following two properties.
\begin{itemize}
\item[(a)] The bound should decay with the number of nodes $K$ in order to take advantage of the total dataset from all $K$ nodes.
\item[(b)] The bound should be in terms of the information theoretic quantities $I(W_k;S_k)$ which can represent (or be upper bounded by) the capacities of the channels that the nodes are communicating over. This can, for example, represent a communication or local differential privacy constraint for each node.
\end{itemize}
At a high level, we will improve on the bound from Theorem \ref{thm:convex} by taking into account the fact that a small change in $S_k$ will only change $\widehat W$ by a fraction  $\frac{1}{K}$ of  the amount that it will change $W_k$.
In the case that $W$ is a linear or location model, and the loss $\ell$ is a Bregman divergence, we can obtain an upper bound on expected generalization error that satisfies both properties (a) and (b) as follows.

\begin{ass} \label{new_assumption}
When $Z=(X,Y)$ are labeled pairs and for loss functions of type (i) in Theorem \ref{thm:breg} below, we assume that
\begin{align*}
&\mathbb{E}_{S,S'}[F(\langle X'_{i,k}, \mathcal{A}(S)\rangle) - F(\langle X'_{i,k}, \mathcal{A}(S^{(i,k)})\rangle)] \\
&\leq \frac{1}{K}\left(\mathbb{E}_{S,S'}[F(\langle X'_{i,k}, \mathcal{A}_k(S_k) \rangle) - F(\langle X'_{i,k}, \mathcal{A}_k(S_k^{(i)})\rangle )]\right) \; .
\end{align*}
\end{ass}

Whether or not this assumption holds true will depend on the distributions involved, the training algorithms, and the function $F$. For least squares regression examples similar to those discussed in the last section of this paper, we have verified, through Monte Carlo simulation, that this assumption appears to hold for all parameter values that we tested. It remains an interesting open problem to understand when this holds true.

\begin{thm}[Linear or Location Models with Bregman Loss] \label{thm:breg}
Suppose that Assumption \ref{ass1} holds for each node. Consider the following two cases:
\begin{itemize}
\item[(i)] $\ell(w,(x,y)) = D_F(\langle x,w\rangle,y)$ (with Assumption \ref{new_assumption}),
\end{itemize}
then 
\begin{align*}
\mathbb{E}_{S\sim \pi^{nK}} [\Delta_\mathcal{A}(S)] & \leq \frac{1}{nK^2}\sum_{i,k} \psi^{*-1}\left(I(W_k; Z_{i,k})\right) \\
& \leq \frac{1}{K^2}\sum_{k=1}^K \psi^{*-1}\left(\frac{I(W_k; S_k)}{n}\right) \; .
\end{align*}
\begin{itemize}
\item[(ii)] $\ell(w,z) = D_F(w,z) \; ,$
\end{itemize}
then
$$\mathbb{E}_{S\sim \pi^{nK}} [\Delta_\mathcal{A}(S)] = \frac{1}{K^2}\sum_{k=1}^K \mathbb{E}_{S_k\sim \pi^n}[\Delta_{\mathcal{A}_k}(S_k)]$$
and
\begin{align*}
\mathbb{E}_{S\sim \pi^{nK}} [\Delta_\mathcal{A}(S)] & \leq \frac{1}{nK^2}\sum_{i,k} \psi^{*-1}\left(I(W_k; Z_{i,k})\right) \\
& \leq \frac{1}{K^2}\sum_{k=1}^K \psi^{*-1}\left(\frac{I(W_k; S_k)}{n}\right) \; .
\end{align*}
\end{thm}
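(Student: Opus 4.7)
The plan is to reduce the aggregate expected generalization error to the sum of the local ones scaled by $1/K^2$, and then invoke Theorem~\ref{thm:info_theoretic} at each node. Start with the leave-one-out expansion of Theorem~\ref{thm:leave}:
\begin{equation*}
\mathbb{E}_{S\sim\pi^{nK}}[\Delta_{\mathcal{A}}(S)] = \frac{1}{nK}\sum_{i,k}\mathbb{E}_{S,S'}\bigl[\ell(\widehat W, Z'_{i,k}) - \ell(\widehat W^{(i,k)}, Z'_{i,k})\bigr].
\end{equation*}
Since $\widehat W = \tfrac{1}{K}\sum_j W_j$ and $S^{(i,k)}$ changes only $Z_{i,k}$, we have the key identity $\widehat W - \widehat W^{(i,k)} = \tfrac{1}{K}(W_k - W_k^{(i)})$ with $W_k^{(i)} = \mathcal{A}_k(S_k^{(i)})$, so any expression that is affine in $\widehat W$ absorbs a factor $1/K$ upon passing to the local model.

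In case (ii), expanding the Bregman divergence and cancelling the $F(Z'_{i,k})$ and $\langle\nabla F(Z'_{i,k}),Z'_{i,k}\rangle$ pieces leaves $F(\widehat W) - F(\widehat W^{(i,k)}) - \langle\nabla F(Z'_{i,k}), \widehat W - \widehat W^{(i,k)}\rangle$. Because $S$ and $S^{(i,k)}$ are both i.i.d.\ $\pi^{nK}$-distributed, $\widehat W$ and $\widehat W^{(i,k)}$ share the same marginal distribution and hence $\mathbb{E}[F(\widehat W) - F(\widehat W^{(i,k)})] = 0$. The surviving affine term contributes $-\tfrac{1}{K}\mathbb{E}\langle\nabla F(Z'_{i,k}), W_k - W_k^{(i)}\rangle$. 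Running the identical calculation at the node level gives $\mathbb{E}[\Delta_{\mathcal{A}_k}(S_k)] = -\tfrac{1}{n}\sum_i\mathbb{E}\langle\nabla F(Z'_{i,k}), W_k - W_k^{(i)}\rangle$, and matching the two expressions produces the stated exact identity $\mathbb{E}[\Delta_{\mathcal{A}}(S)] = \tfrac{1}{K^2}\sum_k\mathbb{E}[\Delta_{\mathcal{A}_k}(S_k)]$.

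In case (i), expand $\ell(w,(x,y)) = F(\langle x,w\rangle) - F(y) - F'(y)(\langle x,w\rangle - y)$; the $F(Y'_{i,k})$ piece again cancels across the leave-one-out difference. The affine-in-$\widehat W$ contribution $-F'(Y'_{i,k})\langle X'_{i,k}, \widehat W - \widehat W^{(i,k)}\rangle$ absorbs $1/K$ automatically via the key identity, while the remaining nonlinear piece $\mathbb{E}[F(\langle X'_{i,k},\widehat W\rangle) - F(\langle X'_{i,k},\widehat W^{(i,k)}\rangle)]$ is exactly what Assumption~\ref{new_assumption} controls by $1/K$. Recombining the two pieces back into a single loss difference at the node level gives
\begin{equation*}
\mathbb{E}_{S,S'}[\ell(\widehat W, Z'_{i,k}) - \ell(\widehat W^{(i,k)}, Z'_{i,k})] \leq \tfrac{1}{K}\,\mathbb{E}_{S,S'}[\ell(W_k, Z'_{i,k}) - \ell(W_k^{(i)}, Z'_{i,k})],
\end{equation*}
which, summed over $(i,k)$, yields $\mathbb{E}[\Delta_{\mathcal{A}}(S)] \leq \tfrac{1}{K^2}\sum_k\mathbb{E}[\Delta_{\mathcal{A}_k}(S_k)]$.

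In both cases, applying Theorem~\ref{thm:info_theoretic} to each local algorithm gives $\mathbb{E}[\Delta_{\mathcal{A}_k}(S_k)] \leq \tfrac{1}{n}\sum_i \psi^{*-1}(I(W_k;Z_{i,k}))$, which after division by $K^2$ produces the first displayed bound; the second follows from the concavity and monotonicity of $\psi^{*-1}$ (an infimum of nondecreasing affine functions of $y$) via Jensen's inequality together with the standard chain-rule inequality $\sum_i I(W_k;Z_{i,k}) \leq I(W_k;S_k)$, valid because the $Z_{i,k}$ are independent within a node. The main subtlety is spotting the exchangeability-based cancellation $\mathbb{E}[F(\widehat W) - F(\widehat W^{(i,k)})] = 0$ that drives case (ii); Assumption~\ref{new_assumption} is precisely what is needed to substitute for this cancellation in case (i), where the outer composition with the inner product destroys the convenient structure.
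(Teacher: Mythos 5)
Your proposal is correct and follows essentially the same route as the paper's proof: the leave-one-out expansion, expanding the Bregman divergence and cancelling $F(\widehat W)-F(\widehat W^{(i,k)})$ in expectation by exchangeability, extracting the $1/K$ factor from the identity $\widehat W - \widehat W^{(i,k)} = \frac{1}{K}(W_k - W_k^{(i)})$, and then applying Theorem~\ref{thm:info_theoretic} per node together with the concavity and monotonicity of $\psi^{*-1}$ and the independence-based bound $\sum_i I(W_k;Z_{i,k}) \leq I(W_k;S_k)$. The only difference is cosmetic: you work out case (i) with Assumption~\ref{new_assumption} explicitly, whereas the paper details only case (ii) and notes the other case is nearly identical.
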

\begin{proof}
Here we restrict our attention to case (ii), but the two cases have nearly identical proofs. Using Theorem \ref{thm:leave},
\begin{align} 
& \mathbb{E}_{S\sim \pi^{nK}} [\Delta_\mathcal{A}(S)] \nonumber \\
& = \frac{1}{nK}\sum_{i,k}\mathbb{E}_{S,S'}\left[ \ell(\mathcal{A}(S),Z_{i,k}') - \ell(\mathcal{A}(S^{(i,k)}),Z_{i,k}')\right] \nonumber \\
& \leq  \frac{1}{nK}\sum_{i,k}\mathbb{E}_{S,S'} \bigg[F(\mathcal{A}(S))-F(Z_{i,k}') \nonumber \\
& \quad \quad \quad \quad \quad  \quad \quad - \big\langle\nabla F(Z_{i,k}'),\mathcal{A}(S)-Z_{i,k}' \big\rangle \nonumber \\
& \quad \quad \quad \quad \quad  \quad \quad - F(\mathcal{A}(S^{(i,k)})) + F(Z_{i,k}') \nonumber \\
& \quad \quad \quad \quad \quad \quad \quad + \big\langle\nabla F(Z_{i,k}'),\mathcal{A}(S^{(i,k)})-Z_{i,k}' \big\rangle\bigg] \nonumber \\
 = & \frac{1}{nK}\sum_{i,k}\mathbb{E}_{S,S'} \bigg[\big\langle\nabla F(Z_{i,k}'),\mathcal{A}(S^{(i,k)})-\mathcal{A}(S) \big\rangle\bigg] \label{eq:key_step1} \\
= & \frac{1}{nK^2}\sum_{i,k}\mathbb{E}_{S,S'} \bigg[\big\langle\nabla F(Z_{i,k}'),W_k^{(i)} - W_k \big\rangle\bigg] \; .  \label{eq:key_step2}
\end{align}
In \eqref{eq:key_step2}, we use $W_k^{(i)}$ to denote $\mathcal{A}_k(S_k^{(i)})$. Line \eqref{eq:key_step1} follows by the linearity of the inner product and by canceling the higher order terms $F(\mathcal{A}(S))$ and $ F(\mathcal{A}(S^{(i,k)}))$ which have the same expected values. The key step \eqref{eq:key_step2} then follows by noting that $\mathcal{A}(S^{(i,k)})$ only differs from $\mathcal{A}(S)$ in the submodel coming from node $k$, which is multiplied by a factor of $\frac{1}{K}$ when averaging all of the submodels. By backing out step $\eqref{eq:key_step1}$ and re-adding the appropriate canceled terms we get
$$\mathbb{E}_{S\sim \pi^{nK}} [\Delta_\mathcal{A}(S)] = \frac{1}{K^2}\sum_{k=1}^K \mathbb{E}_{S_k\sim \pi^n}[\Delta_{\mathcal{A}_k}(S_k)] \; .$$
By applying Theorem \ref{thm:info_theoretic},
$$\mathbb{E}_{S\sim \pi^{nK}} [\Delta_\mathcal{A}(S)] \leq \frac{1}{nK^2}\sum_{i,k} \psi^{*-1}\left(I(W_k; Z_{i,k})\right) \;.$$
Then, by noting that $\psi^{*{-1}}$ is non-decreasing and concave,
\begin{align}
\frac{1}{nK^2}\sum_{i,k} & \psi^{*-1}\left(I(W_k; Z_{i,k})\right) \nonumber \\
& \leq \frac{1}{K^2}\sum_{k=1}^K \psi^{*-1}\left(\sum_{i=1}^n\frac{I(W_k; Z_{i,k})}{n}\right) \nonumber \; .
\end{align}
And using
\begin{align*}
\sum_{i=1}^n & I(W_k; Z_{i,k}) \\
& =  \sum_{i=1}^n H(Z_{i,k}) - H(Z_{i,k}|W_k) \\
& \leq \sum_{i=1}^n H(Z_{i,k}|Z_{i-1,k},\ldots,Z_{1,k}) \\
& \quad \quad \quad \quad - H(Z_{i,k}|Z_{i-1,k},\ldots,Z_{1,k},W_k) \\
& = I(W_k; S_k)
\end{align*}
we have
\begin{align*}
\frac{1}{K^2}\sum_{k=1}^K & \psi^{*-1}\left(\sum_{i=1}^n\frac{I(W_k; Z_{i,k})}{n}\right) \\
& \leq \frac{1}{K^2}\sum_{k=1}^K \psi^{*-1}\left(\frac{I(W_k; S_k)}{n}\right)
\end{align*}
as desired.
\end{proof}
The result in Theorem \ref{thm:breg} is general enough to apply to many problems of interest. For example, if $F(p) = \|p\|_2^2$, then the Bregman divergence $D_F$ gives the ubiquitous squared $\ell^2$ loss, i.e.,
$$D_F(p,q) = \|p-q\|_2^2 \; .$$
For a comprehensive list of realizable loss functions, the interested reader is referred to \cite{banerjee2005clustering}. Using the above $F$, Theorem \ref{thm:breg} can apply to ordinary least squares regression which we will look at in more detail in Section \ref{sec:sims}. Other regression models such as logistic regression have a loss function that cannot be described with a Bregman divergence without the inclusion of an additional nonlinearity. However, the result in Theorem \ref{thm:breg} is agnostic to the algorithm that each node uses to fit its individual model. In this way, each node could be fitting a logistic model to its data, and the total aggregate model would then be an average over these logistic models. Theorem \ref{thm:breg} would still control the expected generalization error for the aggregate model with the extra $\frac{1}{K}$ factor, however, critically, the upper bound would only be for generalization error that is with respect to a loss of the form $D_F(\langle x,w\rangle ,y)$ such as squared $\ell^2$ loss.

In order to show that the dependence on the number of nodes $K$ from Theorem \ref{thm:breg} is tight for certain problems, consider the following example from \cite{veeravalli}. Suppose that $Z\sim\pi=\mathcal{N}(\mu,\sigma^2I_d)$ and $\ell(w,z) = \| w - z \|_2^2$ so that we are trying to learn the mean $\mu$ of the Gaussian. An obvious algorithm for each node to use is simple averaging of its dataset:
$$w_k = \mathcal{A}_k(s_k) = \frac{1}{n}\sum_{i=1}^n z_{i,k} \; .$$ 
For this algorithm, it can be shown that
$$I(\widehat W;Z_{i,k}) = \frac{d}{2}\log\frac{nK}{nK-1}$$ and $$\psi^{*-1}(y) = 2\sqrt{d\left(1+\frac{1}{nK}\right)^2\sigma^4y}$$
(see  Section IV.A. in \cite{veeravalli}). If we apply the existing information theoretic bounds from Theorem \ref{thm:info_theoretic} in an end-to-end way, such as would be the approach from \cite{spawc}, we would get
\begin{align*}
\mathbb{E}_{S\sim\pi^{nK}}[\Delta_\mathcal{A}(S)] & \leq \sigma^2d\sqrt{2\left(1+\frac{1}{nK}\right)^2\log\frac{nK}{nK-1}} \\
& = O\left(\frac{1}{\sqrt{nK}}\right) \; .
\end{align*}
However, for this choice of algorithm at each node, the true expected generalization error can be computed to be
$$\mathbb{E}_{S\sim\pi^{nK}}[\Delta_\mathcal{A}(S)] = \frac{2\sigma^2d}{nK} \; .$$
Applying our new bound from Theorem \ref{thm:breg}, we get
\begin{align*}
\mathbb{E}_{S\sim\pi^{nK}}[\Delta_\mathcal{A}(S)] & \leq \frac{\sigma^2d}{K}\sqrt{2\left(1+\frac{1}{n}\right)^2\log\frac{n}{n-1}}\\
& \leq O\left( \frac{1}{K\sqrt{n}} \right)
\end{align*}
which recovers the correct dependence on $K$ and improves upon the $O\left(\frac{1}{\sqrt{K}}\right)$ result from previous information theoretic methods.

\subsection{General Models and Losses}
In this section we briefly describe some results that hold for more general classes of models and loss functions, such as deep neural networks and other nonlinear models.
\begin{thm}[Lipschitz Continuous Loss] \label{thm:lip}
Suppose that $\ell(w,z)$ is Lipschitz continuous as a function of $w$ in the sense that
$$|\ell(w,z) - \ell(w',z)| \leq C\|w - w'\|_2$$
for any $z$, and that $$\mathbb{E} \left[ \left\| W_k -\mathbb{E}[W_k]\right\|_2 \right] \leq \sigma_0$$ for each $k$.
Then
\begin{align*} 
\mathbb{E}_{S\sim \pi^{nK}} [\Delta_\mathcal{A}(S)] & \leq   \frac{2C\sigma_0}{K}\; .
\end{align*}
\end{thm}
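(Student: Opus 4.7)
The plan is to combine the leave-one-out expansion of Theorem \ref{thm:leave} with the Lipschitz property, and exploit the same ``factor of $1/K$'' observation that drove the proof of Theorem \ref{thm:breg}. Concretely, by Theorem \ref{thm:leave} applied to the $nK$-sized dataset,
\begin{align*}
\mathbb{E}_{S\sim\pi^{nK}}[\Delta_\mathcal{A}(S)] = \frac{1}{nK}\sum_{i,k}\mathbb{E}_{S,S'}\bigl[\ell(\mathcal{A}(S),Z'_{i,k}) - \ell(\mathcal{A}(S^{(i,k)}),Z'_{i,k})\bigr].
\end{align*}
The Lipschitz hypothesis turns each summand into a bound in terms of $\|\mathcal{A}(S) - \mathcal{A}(S^{(i,k)})\|_2$, so the task reduces to controlling this quantity in expectation.

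Next I would use the structural observation that resampling the $i$-th datapoint at node $k$ only affects the $k$-th submodel. Since $\widehat W = \frac{1}{K}\sum_k W_k$, we have the identity
\[
\mathcal{A}(S) - \mathcal{A}(S^{(i,k)}) = \frac{1}{K}\bigl(W_k - W_k^{(i)}\bigr),
\]
writing $W_k^{(i)} = \mathcal{A}_k(S_k^{(i)})$. Plugging this in gives
\[
\mathbb{E}_{S\sim\pi^{nK}}[\Delta_\mathcal{A}(S)] \leq \frac{C}{nK^2}\sum_{i,k}\mathbb{E}\bigl\|W_k - W_k^{(i)}\bigr\|_2.
\]

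Finally, I would bound each $\mathbb{E}\|W_k - W_k^{(i)}\|_2$ by $2\sigma_0$. Because $S_k$ and $S_k^{(i)}$ have the same product distribution, $W_k$ and $W_k^{(i)}$ are identically distributed and share the mean $\mathbb{E}[W_k]$. The triangle inequality then yields
\[
\mathbb{E}\|W_k - W_k^{(i)}\|_2 \leq \mathbb{E}\|W_k - \mathbb{E}[W_k]\|_2 + \mathbb{E}\|W_k^{(i)} - \mathbb{E}[W_k]\|_2 \leq 2\sigma_0,
\]
using the hypothesis on each summand. Summing over the $nK$ pairs $(i,k)$ gives the promised bound $\frac{2C\sigma_0}{K}$.

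I do not expect any real obstacle: the work done in the Bregman-divergence proof already isolated the ``$1/K$ shrinkage when averaging'' mechanism, and the Lipschitz assumption lets me bypass the Bregman-specific linearization and cancellation of $F(\mathcal{A}(S))$ terms. The only subtlety worth double-checking is that $W_k$ and $W_k^{(i)}$ really are equidistributed, which follows because $\mathcal{A}_k$ acts on i.i.d.\ data and relabeling one coordinate preserves the product measure $\pi^n$.
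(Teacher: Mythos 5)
Your proof is correct and follows essentially the same route as the paper's: the leave-one-out expansion of Theorem \ref{thm:leave}, the Lipschitz bound, the identity $\mathcal{A}(S) - \mathcal{A}(S^{(i,k)}) = \frac{1}{K}(W_k - W_k^{(i)})$, and the triangle inequality through $\mathbb{E}[W_k]$. Your extra remark that $W_k$ and $W_k^{(i)}$ are identically distributed is a detail the paper leaves implicit, but it is exactly right and needed to apply the $\sigma_0$ hypothesis to the second term.
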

\begin{proof}
Starting with Theorem \ref{thm:leave},
\begin{align} 
\mathbb{E}_{S\sim \pi^{nK}} & [\Delta_\mathcal{A}(S)] \nonumber \\
& = \frac{1}{nK}\sum_{i,k}\mathbb{E}_{S,S'}\left[ \ell(\mathcal{A}(S),Z_{i,k}') - \ell(\mathcal{A}(S^{(i,k)}),Z_{i,k}'\right] \nonumber \\
& \leq  \frac{1}{nK}\sum_{i,k}\mathbb{E}_{S,S'} \left[C \left\| \mathcal{A}(S) - \mathcal{A}(S^{(i,k)})\right\|_2\right]  \label{eq:lip}\\
& = \frac{1}{nK^2}\sum_{i,k}\mathbb{E}_{S,S'} \left[C \left\| W_k - W_k^{(i)}\right\|_2\right] \nonumber\\
& \leq \frac{C}{nK^2}\sum_{i,k}\mathbb{E}_{S,S'} \left[\left\| W_k - \mathbb{E}[W_k]\right\|_2\right] \nonumber  \\
& \quad \quad \quad \quad \quad + \mathbb{E}_{S,S'} \left[\left\| W_k^{(i)} - \mathbb{E}[W_k]\right\|_2\right] \label{eq:lip2} \\
& \leq \frac{2C\sigma_0}{K} \; . \label{eq:lip3}
\end{align}
\noindent Equation \eqref{eq:lip} follows due to Lipschitz continuity, equation \eqref{eq:lip2} uses the triangle inequality, and equation \eqref{eq:lip3} is by assumption. 


\end{proof}
The bound in Theorem \ref{thm:lip} is not in terms of the information theoretic quantities $I(W_k;S_k)$, but it does show that the $O\left(\frac{1}{K}\right)$ upper bound can be shown for much more general loss functions and arbitrary nonlinear models.

\subsection{Privacy and Communication Constraints}

Both communication constraints and local differential privacy constraints can be thought of as special cases of mutual information constraints. Motivated by this observation, Theorem \ref{thm:breg} immediately implies corollaries for these types of system.
\begin{cor}[Privacy Constraints]
Suppose each node's algorithm $\mathcal{A}_k$ is an $\varepsilon$-local differentially private mechanism in the sense that $\frac{p(w_k|s_k)}{p(w_k|s'_k)} \leq e^\varepsilon$ for each $w_k,s_k,s'_k$. Then for losses $\ell$ of the form in Theorem \ref{thm:breg}, and under Assumption \ref{ass2},
\begin{align*}
\mathbb{E}_{S\sim \pi^{nK}} [\Delta_\mathcal{A}(S)] \leq \frac{1}{K}\sqrt{\frac{2R^2\min\{\varepsilon,(e-1)\varepsilon^2\}}{n}} \; .
\end{align*}
\end{cor}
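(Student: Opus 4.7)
My plan is to obtain the corollary by combining the generalization bound of Theorem \ref{thm:breg} with a standard mutual-information bound for $\varepsilon$-local differentially private mechanisms.

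First, under Assumption \ref{ass2} we have $\psi(\lambda) = R^2\lambda^2/2$, and a direct optimization of $(y+R^2\lambda^2/2)/\lambda$ gives $\psi^{*-1}(y) = \sqrt{2R^2 y}$. Plugging this into the second inequality of Theorem \ref{thm:breg} yields
\[
\mathbb{E}_{S\sim \pi^{nK}}[\Delta_\mathcal{A}(S)] \leq \frac{1}{K^2}\sum_{k=1}^K \sqrt{\frac{2R^2\, I(W_k; S_k)}{n}}.
\]
It remains to bound $I(W_k;S_k)$ using the LDP hypothesis, and then to observe that the square root is concave so that all $K$ summands can be collapsed into the claimed single-term bound.

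Second, I would prove that $\varepsilon$-local differential privacy implies $I(W_k;S_k)\leq \min\{\varepsilon,(e-1)\varepsilon^2\}$. Writing $I(W_k;S_k) = \mathbb{E}_{S_k}\bigl[D_{\mathrm{KL}}\bigl(p(W_k\mid S_k)\,\|\,p(W_k)\bigr)\bigr]$ and using the convexity of KL divergence in its second argument to replace the marginal $p(W_k)$ by $p(W_k\mid S_k = s')$ for an independent copy $s'$, one obtains
\[
I(W_k;S_k) \leq \sup_{s,s'} D_{\mathrm{KL}}\bigl(p(W_k\mid S_k=s)\,\|\,p(W_k\mid S_k=s')\bigr).
\]
The LDP condition $p(w_k\mid s)/p(w_k\mid s')\leq e^\varepsilon$ immediately gives $D_{\mathrm{KL}}\leq \varepsilon$, since the log-likelihood ratio is bounded by $\varepsilon$ pointwise. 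For the sharper bound, I would use the two-sided control $e^{-\varepsilon}\leq p(w_k\mid s)/p(w_k\mid s')\leq e^\varepsilon$ together with the inequality $\log t \leq t-1$ and the fact that $\int(p(w_k\mid s) - p(w_k\mid s'))\,dw_k = 0$ to rewrite $D_{\mathrm{KL}}$ as an integral of $p(w_k\mid s)\log(p(w_k\mid s)/p(w_k\mid s'))$ and bound it by $(e^\varepsilon - 1)\varepsilon\leq (e-1)\varepsilon^2$ in the regime $\varepsilon\leq 1$. Combining the two gives $I(W_k;S_k)\leq \min\{\varepsilon,(e-1)\varepsilon^2\}$.

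Finally, substituting this bound into the Theorem \ref{thm:breg} inequality makes every one of the $K$ summands identical and equal to $\sqrt{2R^2\min\{\varepsilon,(e-1)\varepsilon^2\}/n}$; summing and dividing by $K^2$ produces the claimed $1/K$ prefactor. The only genuinely non-trivial step is the quadratic-in-$\varepsilon$ refinement of the KL bound, which requires the two-sided log-likelihood ratio control rather than just the one-sided LDP inequality; the rest is a mechanical combination of Theorem \ref{thm:breg} with Assumption \ref{ass2}.
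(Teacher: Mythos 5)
Your proof is correct and is precisely the argument the paper leaves implicit when it says Theorem \ref{thm:breg} ``immediately implies'' this corollary: specialize $\psi^{*-1}(y)=\sqrt{2R^2y}$ under Assumption \ref{ass2}, and bound $I(W_k;S_k)$ by $\sup_{s,s'} D_{\mathrm{KL}}\bigl(p(w_k\mid s)\,\|\,p(w_k\mid s')\bigr)\leq \min\{\varepsilon,(e-1)\varepsilon^2\}$ using the standard KL bounds for $\varepsilon$-LDP channels (the linear bound from the pointwise log-ratio, the quadratic one from $\log t\leq t-1$ with the two-sided ratio control, noting the quadratic term only matters when $\varepsilon\leq 1/(e-1)<1$ so your small-$\varepsilon$ restriction is harmless). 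No gaps; the collapse of the $K$ identical summands then gives the stated $1/K$ prefactor.
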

\begin{cor}[Communication Constraints]
Suppose each node can only transit $B$ bits of information to the model aggregator, meaning that each $W_k$ can only take $2^B$ distinct possible values. Then for losses $\ell$ of the form in Theorem \ref{thm:breg}, and under Assumption \ref{ass2},
\begin{align*}
\mathbb{E}_{S\sim \pi^{nK}} [\Delta_\mathcal{A}(S)] \leq \frac{1}{K}\sqrt{\frac{2(\log 2)R^2B}{n}} \; .
\end{align*}
\end{cor}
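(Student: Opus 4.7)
The plan is to apply Theorem \ref{thm:breg} directly and then bound each mutual information term $I(W_k; S_k)$ using the communication constraint. First, I would specialize Theorem \ref{thm:breg} to the sub-Gaussian setting of Assumption \ref{ass2}, for which $\psi(\lambda) = R^2\lambda^2/2$. A routine calculation of the Legendre-type inverse gives $\psi^{*-1}(y) = \sqrt{2R^2 y}$, so that the second inequality of Theorem \ref{thm:breg} becomes
\begin{equation*}
\mathbb{E}_{S\sim\pi^{nK}}[\Delta_\mathcal{A}(S)] \leq \frac{1}{K^2}\sum_{k=1}^K \sqrt{\frac{2R^2\, I(W_k;S_k)}{n}} \; .
\end{equation*}

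Next, I would translate the communication constraint into an upper bound on $I(W_k;S_k)$. Since $W_k$ is supported on a set of cardinality at most $2^B$, its Shannon entropy satisfies $H(W_k) \leq B\log 2$ (in nats). The standard inequality $I(W_k;S_k) = H(W_k) - H(W_k\mid S_k) \leq H(W_k)$ then yields $I(W_k;S_k) \leq B\log 2$ for every node $k$.

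Finally, I would substitute this bound into the displayed inequality. Each of the $K$ summands is bounded by $\sqrt{2R^2 B\log 2 / n}$, which is independent of $k$, so the sum contributes a factor of $K$, giving
\begin{equation*}
\mathbb{E}_{S\sim\pi^{nK}}[\Delta_\mathcal{A}(S)] \leq \frac{1}{K}\sqrt{\frac{2(\log 2)R^2 B}{n}} \; ,
\end{equation*}
as claimed. There is no real obstacle in this argument: the whole work has already been done in Theorem \ref{thm:breg}, and the communication constraint plugs in through the elementary entropy bound $H(W_k) \leq B \log 2$. The only mild care needed is ensuring consistent units (nats versus bits) in the entropy inequality, which is why the factor of $\log 2$ appears in the final expression.
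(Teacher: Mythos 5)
Your proof is correct and follows exactly the route the paper intends: the corollary is stated as an immediate consequence of Theorem \ref{thm:breg}, obtained by specializing to the sub-Gaussian case where $\psi^{*-1}(y)=\sqrt{2R^2y}$ and bounding $I(W_k;S_k)\leq H(W_k)\leq B\log 2$ from the cardinality constraint. Your computation of $\psi^{*-1}$ and the final arithmetic (the factor $K/K^2=1/K$) both check out, so there is nothing to add.
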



\section{Iterative Algorithms}

We now turn to considering more complicated multi-round and iterative algorithms. In this setup, after $T$ rounds there is a sequence of weights $W^{(T)} = (W^{1},\ldots,W^{T})$ and the final model $\widehat W_T = f_T(W^{(T)})$ is a function of that sequence where $f_T$ gives a linear combination of the $T$ vectors $W^{1},\ldots,W^{T}$. The function $f_T$ could represent, for example, averaging over the $T$ iterates, picking out the last iterate $W^T$, or some weighted average over the iterates. On each round $t$, each node $k$ produces an updated model $W^t_k$ based on its local dataset $S_k$ and the previous timestep's global model $W^{t-1}$. The global model is then updated via an average over all $K$ updated submodels:
\begin{align*}
W^t = \frac{1}{K}\sum_{k=1}^K W_k^t \; .
\end{align*}
The particular example that we will consider is that of distributed SGD, where each node constructs its updated model $W_k^t$ by taking one or more gradient steps starting from $W^{t-1}$ with respect to random minibatches of its local data. Our model is general enough to account for multiple local gradient steps as is used in so-called Federated Learning \cite{federated0,federated1,federated2}, as well as noisy versions of SGD such as in \cite{jog,wang2021generalization}. If only one local gradient step is taken on each iteration, then the update rule for this particular example could be written as
\begin{align} \label{eq:sgd_update}
W_k^t = W^{t-1} - \eta_t \nabla_w \ell(W^{t-1},Z_{t,k}) + \xi_t
\end{align}
where $Z_{t,k}$ is a data point (or minibatch) sampled from $S_k$ on timestep $t$, $\eta_t$ is the learning rate, and $\xi_t$ is some potential added noise. We assume that the data points $Z_{t,k}$ are sampled without replacement so that the samples are distinct across different values of $t$.

For this type of iterative algorithm, we will consider the following timestep averaged empirical risk quantity:
\begin{align*}
\frac{1}{KT} \sum_{t=1}^T\sum_{k=1}^K \ell(\widehat W_t,Z_{t,k}) \; ,
\end{align*}
and the corresponding generalization error
\begin{align} \label{eq:iter_gen_error}
\Delta_\mathsf{sgd}(S) = \frac{1}{T} \sum_{t=1}^T \left(\mathbb{E}_{Z\sim\pi}[\ell(\widehat W_t,Z)] - \frac{1}{K}\sum_{k=1}^K \ell(\widehat W_t,Z_{t,k})\right) \; .
\end{align}
Note that the quantity in \eqref{eq:iter_gen_error} is slightly different than the end-to-end generalization error that we would get considering the final model $\widehat W_T$ and whole dataset $S$. It is instead an average over the generalization error we would get from each model stopping at iteration $t$. We do this so that when we apply the leave-one-out expansion from Theorem \ref{thm:leave}, we do not have to account for the dependence of $W_k^t$ on past samples $Z_{t',k'}$ for $t'<t$ and $k'\neq k$. Since we expect the generalization error to decrease as we use more samples, this quantity should result in a more conservative upper bound and be a reasonable surrogate object to study. The following bound follows as a corollary to Theorem \ref{thm:breg}.
\begin{cor}
For losses $\ell$ of the form in Theorem \ref{thm:breg}, and under Assumption \ref{ass2},
\begin{align*}
\mathbb{E}\left[\Delta_\mathsf{sgd}(S)\right] \leq \frac{1}{T} \sum_{t=1}^T \frac{1}{K^2} \sum_{k=1}^K \sqrt{2R^2I(W_k^t;Z_{t,k})} \; .
\end{align*}
\end{cor}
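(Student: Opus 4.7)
The plan is to observe that $\mathbb{E}[\Delta_\mathsf{sgd}(S)]$ is simply an average over $T$ per-round generalization errors and to reduce each one to a single application of Theorem \ref{thm:breg} with $n=1$ sample per node. Concretely, by linearity of expectation $\mathbb{E}[\Delta_\mathsf{sgd}(S)] = \frac{1}{T}\sum_{t=1}^T \mathbb{E}[\Delta_t]$, where
$$\Delta_t = \mathbb{E}_{Z\sim\pi}[\ell(\widehat W_t, Z)] - \frac{1}{K}\sum_{k=1}^K \ell(\widehat W_t, Z_{t,k})$$
is the generalization error, for the averaged model $\widehat W_t = \frac{1}{K}\sum_{k=1}^K W_k^t$, evaluated only on the fresh batch of samples $(Z_{t,1},\ldots,Z_{t,K})$ used at round $t$. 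It then suffices to bound each $\mathbb{E}[\Delta_t]$ by $\frac{1}{K^2}\sum_{k=1}^K \sqrt{2R^2\, I(W_k^t;Z_{t,k})}$.

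Next I argue that round $t$ is structurally identical to the single-round aggregation setup of Section II with $n=1$ sample per node. Because sampling is done without replacement across rounds, $(Z_{t,1},\ldots,Z_{t,K})$ is i.i.d.\ from $\pi$ and jointly independent of $W^{t-1}$ and of any injected noise $\xi_t$. The update rule \eqref{eq:sgd_update} therefore defines, for each $k$, a stochastic local algorithm $\mathcal{A}_k^t\colon Z_{t,k} \mapsto W_k^t$ whose auxiliary randomness comes from the pair $(W^{t-1},\xi_t)$, which is independent of the input datum. Applying Theorem \ref{thm:breg} (with $n=1$) at round $t$ to the averaged model $\widehat W_t$ gives
$$\mathbb{E}[\Delta_t] \leq \frac{1}{K^2}\sum_{k=1}^K \psi^{*-1}\bigl(I(W_k^t;Z_{t,k})\bigr),$$
and specializing to Assumption \ref{ass2}, where $\psi^{*-1}(y)=\sqrt{2R^2 y}$, followed by averaging over $t$, delivers the stated bound.

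The main subtlety to address is justifying that round $t$ can be treated as an isolated one-shot distributed learning problem despite $W_k^t$ carrying dependence on every earlier sample through $W^{t-1}$. This is precisely why the round-averaged form of generalization error in \eqref{eq:iter_gen_error} was chosen: each $\Delta_t$ compares $\widehat W_t$ only to the samples drawn on round $t$, so the leave-one-out expansion of Theorem \ref{thm:leave} underlying Theorem \ref{thm:breg} only has to replace a single $Z_{t,k}$ by an i.i.d.\ copy and never needs to re-randomize $W^{t-1}$. The earlier iterates simply play the role of independent internal randomness for the round-$t$ local algorithms, which keeps the mutual-information term in its unconditional form $I(W_k^t;Z_{t,k})$ rather than a conditional variant, and ensures no additional factor arises when averaging over $t$.
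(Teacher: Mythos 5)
Your proposal is correct and follows essentially the same route the paper intends: the paper states the corollary without a detailed proof, noting only that it follows from Theorem \ref{thm:breg} applied per round, which is exactly your reduction to a one-shot aggregation problem with $n=1$ sample per node, with $(W^{t-1},\xi_t)$ treated as algorithm-internal randomness independent of the fresh (sampled-without-replacement) round-$t$ data, and $\psi^{*-1}(y)=\sqrt{2R^2y}$ under Assumption \ref{ass2}. Your closing observation about why the round-averaged error \eqref{eq:iter_gen_error} makes the leave-one-out expansion go through mirrors the paper's own justification almost verbatim, and your identification of $\widehat W_t$ with the round-$t$ average $\frac{1}{K}\sum_k W_k^t$ is consistent with the intended argument.
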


\section{Simulations} \label{sec:sims}
We simulated a distributed linear regression example in order to demonstrate the improvement in our bounds over the existing information theoretic bounds. To do this, we generated $n=10$ synthetic datapoints at each of $K$ different nodes for various values of $K$. Each datapoint consisted of a pair $(x,y)$ where $y=xw_0+n$ with $x,n \sim \mathcal{N}(0,1)$, and $w_0\sim \mathcal{N}(0,1)$ was the randomly generated true weight that was common to all datapoints. Each node constructed an estimate $\widehat w_k$ of $w_0$ using the well-known normal equations which minimize the  $\ell^2$ loss, i.e., $\widehat w_k = \argmin_{w} \sum_{i=1}^n (wx_{i,k} - y_{i,k})^2$. The aggregate model was then the average $\widehat w = \frac{1}{K}\sum_{k=1}^K \widehat w_k$. In order to estimate the old and new information theoretic generalization bounds (i.e., the bounds from Theorems \ref{thm:info_theoretic} and \ref{thm:breg}, respectively), this procedure was repeated $M=10^6$ times and the datapoint and model values were binned in order to estimate the mutual information quantities. The value of $M$ was increased until the mutual information estimates were no longer particularly sensitive to the number and widths of the bins. In order to estimate the true generalization error, the expectations for both the population risk and the dataset were estimated by Monte Carlo with $10^4$ trials each. The results can be seen in Figure \ref{fig2}, where it is evident that the new information theoretic bound is much closer to the true expected generalization error, and decays with an improved rate as a function of $K$.
\begin{figure}
\begin{center}
\includegraphics[width=3in]{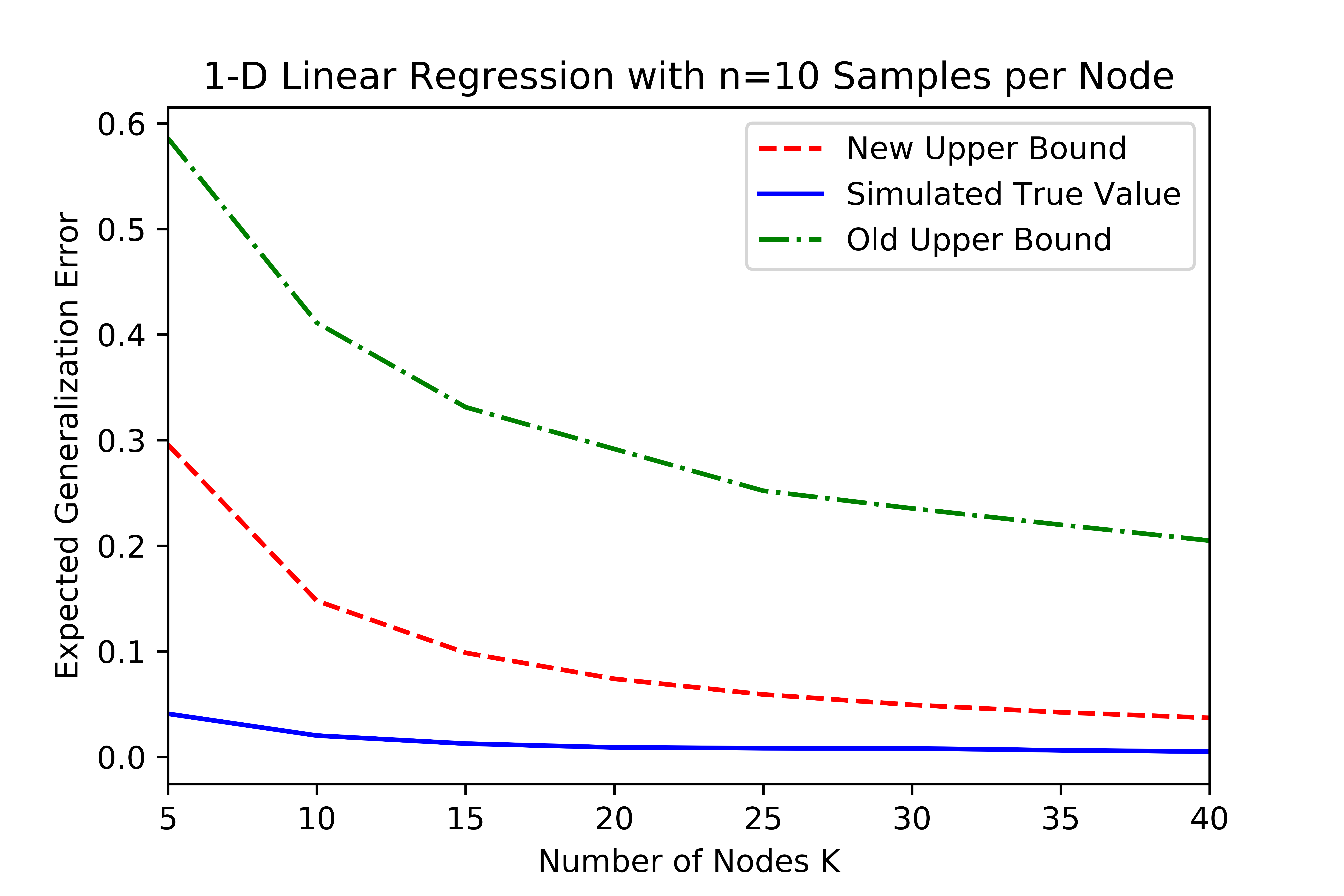}
\includegraphics[width=3in]{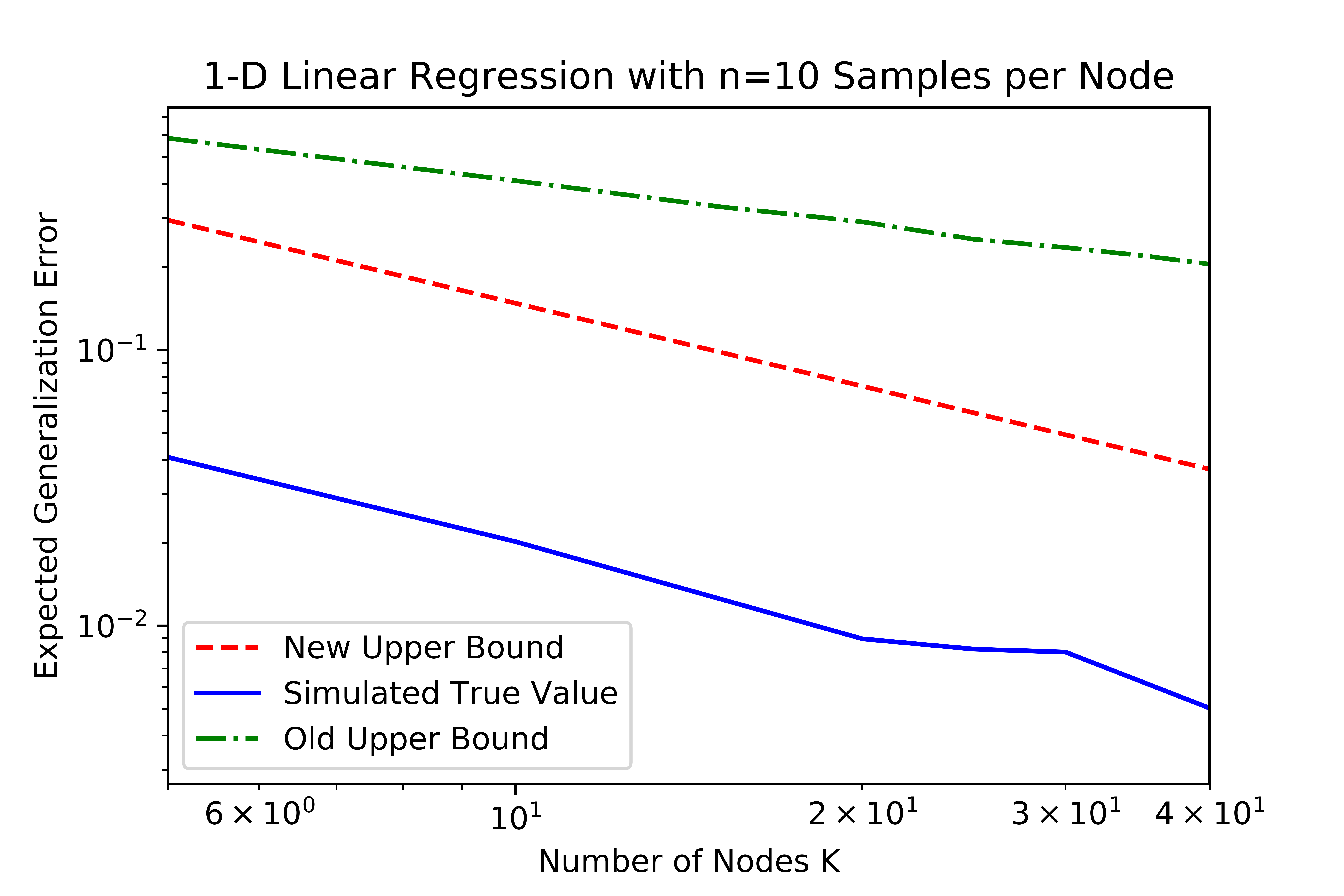}
\end{center}
\caption{Information theoretic upper bounds and expected generalization error for a simulated linear regression example in linear (top) and log (bottom) scales. \label{fig2}}
\end{figure}

\nocite{*}
\bibliographystyle{IEEEtran}
\bibliography{main_v5.bib}

\end{document}